%% file: main.tex
\RequirePackage{tikz}  %needs to be loaded before epl2 class
\usetikzlibrary{calc}
\usetikzlibrary{decorations.markings,arrows.meta}

\documentclass[doublecol, figures, final]{epl2}
\usepackage{amsmath, amssymb, mathtools, amsthm}
\usepackage{tabularx,array,booktabs,multirow}

\usepackage{xcolor}
\definecolor{Green}{HTML}{008000}
\definecolor{Blue}{HTML}{2323FF}
\usepackage[colorlinks=True]{hyperref}
\hypersetup{allcolors=black}
\newtheoremstyle{plain}
{}{}{\itshape}{}{\bfseries}{.}{.5em}{}
\theoremstyle{plain}    % kursive Schrift
\newtheorem{theorem}{Theorem}
\newtheorem*{theorem*}{Main theorem}  % only unnumbered theorem  
\newtheorem{lemma}[theorem]{Lemma}

\newtheorem*{definition*}{Definition}
\newtheoremstyle{cr}
{}{}{\itshape}{}{\bfseries}{.}{.5em}{{\thmname{#1} (CR\thmnumber{#2})}}
\theoremstyle{cr}    % for the computational results
\newtheorem{compres}{Computational result}
\newcommand{\cA}{\mathcal{A}}
\newcommand{\cB}{\mathcal{B}}
\newcommand{\cC}{\mathcal{C}}

\newcommand{\cM}{\mathcal{M}}

\DeclareMathOperator{\Tr}{Tr}
\newcommand{\bN}{\mathbb{N}}

\tikzset{
	mid arrow/.style = {
		postaction = {
			decorate,
			decoration = {
				markings,
				mark = at position 0.5 with {\arrow[line width=0.8pt]{Stealth}}
			}
		}
	},
	c/.style = {
		fill, insert path={circle[radius=0.05]}
	}
}

\newcommand{\drawgraph}[4]{
	\def\r{1.5cm} % radius
	\def\d{0.5cm} % offset for label
	\def\x{4} % scope scaling x
	\def\y{3.97} % scope scaling y
	\begin{scope}[shift={(#2*\x,#3*\y)}]
		\node at (135:\r+\d) {#1};
		\foreach \i in {0,...,15} {
			\coordinate (\i) at ({90 - 360/16 * \i}:\r);
		}
		\foreach \i in {0,2,...,14} {
			\pgfmathtruncatemacro{\j}{\i+1}
			\draw[red, thick] (\i) -- (\j);
		}
		\foreach \i in {1,3,...,13} {
			\pgfmathtruncatemacro{\j}{\i+1}
			\draw[green!60!black, thick] (\i) -- (\j);
		}
		\draw[green!60!black, thick] (15) -- (0);
		\foreach \edge in {#4} {
			\draw[blue, thick] \edge;
		}
		\foreach \i in {0,...,15} {
			\draw[fill=black] (\i) circle (0.05);
		}
	\end{scope}
}

\newcommand{\drawgraphtwo}[4]{
	\def\r{0.75cm} % radius
	\def\d{0.5cm} % offset for label
	\def\x{4} % scope scaling x
	\def\y{3.97} % scope scaling y
	\begin{scope}[shift={(#2*\x,#3*\y)}]
		\node at (135:\r+\d) {#1};
		\foreach \i in {0,...,7} {
			\coordinate (\i) at ({90 - 360/8 * \i}:\r);
		}
		\foreach \i in {8,...,15} {
			\coordinate (\i) at ($({90 + 360/8 + 360/8 * \i}:\r) + (3*\r,0)$);
		}
		\foreach \i in {0,2,...,14} {
			\pgfmathtruncatemacro{\j}{\i+1}
			\draw[red, thick] (\i) -- (\j);
		}
		\foreach \i in {1,3,5} {
			\pgfmathtruncatemacro{\j}{\i+1}
			\draw[green!60!black, thick] (\i) -- (\j);
		}
		\draw[green!60!black, thick] (7) -- (0);
		\foreach \i in {9,11,13} {
			\pgfmathtruncatemacro{\j}{\i+1}
			\draw[green!60!black, thick] (\i) -- (\j);
		}
		\draw[green!60!black, thick] (15) -- (8);
		\foreach \edge in {#4} {
			\draw[blue, thick] \edge;
		}
		\foreach \i in {0,...,15} {
			\draw[fill=black] (\i) circle (0.05);
		}
	\end{scope}
}

\title{Low order maximally single-trace graphs as the first counter\-examples to large \boldmath$N$ factorization in random tensors}
\shorttitle{Counterexamples to large $N$ factorization in random tensors}

\author{J. Bethold \and H. Keppler}
\shortauthor{J. Bethold and H. Keppler}

\institute{                    
   Heidelberg University, Institut f{\"u}r Theoretische Physik, Philosophenweg 16, 69120 Heidelberg, Germany, EU.
  }

\abstract{
We give the first and lowest order examples of 3-regular 3-edge-colored graphs that demonstrate the non-factorization of tensor model invariants in the large $N$ limit of Gaussian random tensors. The existence of such graphs has recently been proven on general grounds, but without providing explicit examples. Moreover, it was expected that such examples require graphs with a very large number of vertices. We show that the smallest such graphs have 16 vertices.
The non-factorization for Gaussian random tensors is in stark contrast to the well-known large $N$ factorization for random matrices.
}

\begin{document}

\maketitle
% epl2 class redefines \[
% Reinstate definition:
\let\[\undefined
\DeclareRobustCommand\[{\begin{equation*}}

\section{Introduction}

Random tensors are a generalization of random matrices and have multiple applications in physics and mathematics (see refs. in \cite{Gurau:2025evo}, and \cite{Ambjorn,Sasakura:1990fs,gurau,Carrozza:2024gnh,tanasabook,GurauAIHPD,Klebanov:2018fzb} for reviews and textbook accounts). We understand random tensors as $D$-dimensional ($D\geq 3$) arrays of random variables $T_{a_1\dots a_D}$, $a_i\in\{1,\dots,N\}$, with appropriate transformation properties under basis changes, and no symmetry under permutation of the indices. Similar to random matrices ($D=2$), the expectation values of invariant observables of $O(N)^{\otimes D}$-invariant ($U(N)^{\otimes D}$ in the complex case) random tensor models  admit a $1/N$ expansion. While for matrices, this expansion is dominated by planar graphs, for tensors the expectation values of invariant observables are governed at leading order by so-called \textit{melonic} graphs \cite{Bonzom:2011zz,Gurau:2011aq,Gurau-N,Bonzom:2012hw,Gurau:2013cbh,Gurau:2013pca,Gurau:2011kk,Benedetti:2017qxl,Carrozza:2015adg,Carrozza:2018ewt,Carrozza:2021qos,Klebanov}.

For a large class of random $N\times N$ matrices, the moments (that is, expectation values of matrix invariants) factorize into products of cumulants (connected expectations) in the large $N$ limit. For example, for a Hermitian random matrix $H$ and $q_1,\dots,q_m\in \bN$, $m\geq1$,
\[
\left\langle \Tr (H^{q_1})\dots \Tr (H^{q_m})\right\rangle \xrightarrow{N\to\infty} \left\langle \Tr (H^{q_1})\right\rangle_c\dots \left\langle\Tr (H^{q_m})\right\rangle_c ,
\]
where we denote the cumulants by $\langle\, \cdot\, \rangle_c$. In other words, the joint cumulants
\[
\left\langle \Tr (H^{q_1})\dots \Tr (H^{q_m})\right\rangle_c , \quad q_1,\dots,q_m\in \bN,\; m>1,
\] are suppressed in $1/N$, relative to the product of the individual cumulants
\[
\left\langle \Tr (H^{q_1})\right\rangle_c\dots \left\langle \Tr (H^{q_m})\right\rangle_c .
\]
In physics, the large $N$ factorization is important for understanding large $N$ conformal field theories, the AdS/CFT conjecture \cite{Maldacena:1997re,Gubser:1998bc,Witten:1998qj}, and, more recently, the factorization problem in the context of Jackiw--Teitelboim gravity \cite{Maldacena:2004rf,Saad:2019lba,Blommaert:2021fob,Mukhametzhanov:2021hdi}. In the large $N$ limit, quantum fluctuations are suppressed and the theories become classical. In mathematics, it is crucial for understanding the theory of free probability as the leading large $N$ limit of moments of random matrices \cite{Voiculescu:1991,Voiculescu:1992,Nica_Speicher:2006,Bonnin:2024lha,Collins:2024pip,Nechita:2025}.

Gurau, Joos and Sudakov \cite{Gurau:2025evo} showed that the large $N$ factorization generically does not hold for Gaussian random tensors. Not all expectation values of products of tensor invariants (to be discussed in more details below) factorize into a product of cumulants in the large $N$ limit. Tensor invariants that are polynomial in the tensor entries are called trace invariants. These are in correspondence with regular edge-colored graphs $G$ where the colored edges encode tensor contractions of indices whose position is specified by the coloring. We therefore denote trace invariants by $\Tr_G(T)$. Throughout the paper, we denote by $n$ half the number of vertices of $G$, and we note that $\Tr_G(T)$ is a homogeneous polynomial of degree $2n$. The authors of \cite{Gurau:2025evo} proved that factorization does not hold by showing that if $n$ is large enough, there exist connected edge-colored graphs $G$, such that
\[ 
\left\langle\Tr_G(T)\Tr_G(T)\right\rangle
\]
does not factorize, \textit{i.e.},{$
\left\langle\Tr_G(T) \Tr_G(T)\right\rangle_c $ is larger than or equal to $ \left\langle\Tr_G(T)\right\rangle_c\left\langle\Tr_G(T)\right\rangle_c
$
in scaling with $N$.
Their proof uses techniques from probabilistic combinatorics, and they show that for very large $n$, almost all graphs $G$ are such that $\left\langle\Tr_G(T)\Tr_G(T)\right\rangle$ does not factorize. However, they do not give an explicit counterexample (as it is not relevant for the proof). Using the estimates given in Theorem 1 from \cite{Gurau:2025evo}, one naively expects counterexamples for huge graphs with $n\sim 3\cdot 10^4$. Since the bounds in \cite{Gurau:2025evo} are not tight, this number should be understood as a large overestimate.

In this paper, we give the first explicit and lowest order counterexamples to large $N$ factorization for real Gaussian random tensors with $D=3$. We find 42 such graphs already at $n=8$, \textit{i.e.}, with $16$ vertices. Moreover, we show that these are indeed the smallest graphs that provide the desired counterexamples and that there are no such graphs for $n=9$. 

The paper is structured as follows: First, we give details and definitions of random tensors, their invariants, and edge-colored graphs. Second, we detail our results, which are divided into two computational results---since we checked certain classes of graphs using a computer program---and a main theorem which together show the claimed results. Third, we prove our main theorem, and conclude and discuss our results at the end.

\section{Preliminaries}
We consider real tensors transforming in the outer tensor product of three fundamental representations of the $O(N)$ group\cite{Carrozza:2015adg}. In components, each index transforms individually:
\[
\begin{gathered}
T'_{a_1 a_2 a_3}
=
\sum_{b_1,b_2,b_3}
O^{(1)}_{a_1 b_1}
O^{(2)}_{a_2 b_2}
O^{(3)}_{a_3 b_3}
\, T_{b_1 b_2 b_3} \;,
\\
O^{(1)}, O^{(2)}, O^{(3)} \in O(N) \, .
\end{gathered}
\]
We call the position of an index its color.

\subsection{Invariants and graphs}
Invariants under the $O(N)^{\otimes 3}$ action that are polynomial in the tensor entries can be written as a linear combination of so-called \emph{trace invariants}. These are associated with 3-regular 3-edge-colored graphs $G$ that we define below for $D$ colors.
\begin{definition*}
A \emph{$D$-regular $D$-edge-colored graph} is a tuple
\[
G=(V(G),E_1(G),\dots,E_D(G)) \,,
\]
where $V(G)=\{v_1,\dots,v_{2n}\}$ is a finite vertex set, and, for $i\in\{1,\dots,D\}$, the set $E_i(G)$ is a set of edges of color $i$, such that:
\begin{itemize}
    \item $E_i(G)$ is a pairing of $V(G)$, that is, it is a partition of $V(G)$ into unordered pairs $\{v_i,v_j\}$.
    \item all vertices are $D$-valent and incident to exactly one edge of each color.
\end{itemize}
A $D$-colored graph is connected, if it is connected as an uncolored graph with edge set $E_1(G) \cup \dots \cup E_D(G)$.
\end{definition*}
In the following, we refer to these graphs as $D$-colored graphs or colored graphs for short. We will also omit the argument $G$, \textit{e.g.}, in $E_i(G)$, if it is clear to which graph we refer. 
Two $D$-colored graphs are isomorphic $G\cong G'$ if there exists a bijection (relabeling) between their vertex sets that preserves the adjacency relations and respects the coloring. 
We think of edge-colored graphs primarily in terms of their edge sets and sometimes suppress the vertex set from the notation. The union of two colored graphs $G=(V(G),E_1(G),\dots,E_D(G))$, and $G'=(V(G'),E_1(G'),\dots,E_D(G'))$ is defined as the union of their edge and vertex sets, respecting the colors:
\[
\begin{aligned}
    G\cup G' \coloneqq (&V(G)\cup V(G'),E_1(G)\cup E_1(G'),\\
    &\ \dots,E_D(G)\cup E_D(G')) \;.
\end{aligned}
\]
We write $G\sqcup G'$ if we want to emphasize that the vertex and edge sets of $G$ and $G'$ are disjoint. For a $D$-colored graph $G$, and a pairing of its vertices $M$, we define the $(D+1)$-colored graph $G\cup M$, by adding the pairs in $M$ as edges of a new color:
\[
G\cup M \coloneqq (V(G),E_1(G),\dots,E_D(G),M)
\]
For $i\neq j$, $i,j\in\{1,\dots,D\}$, we call the connected components of the 2-colored subgraph $G_{ij}=(V,E_i,E_j)$ \textit{faces} of colors $(i,j)$. Equivalently, a face of colors $(i,j)$ is a cycle in $G$ whose edges alternate between colors $i$ and $j$. We denote the number of faces of colors $(i,j)$ in $G$ by $F_{ij}(G)$. 
A $D$-colored graph $G$ is said to be \textit{maximally single-trace} (MST)\footnote{The term was introduced in the tensor model literature in \cite{Ferrari:2017jgw}. In the mathematical literature, such colorings are referred to as perfect one-factorizations \cite{Wallis:1997}.} iff
\[
F_{ij}(G)=1\;,\quad \forall i\neq j,\ i,j\in\{1,\dots,D\}\;.
\]

From now on $D=3$. To each $3$-colored graph $G$ we associate a trace invariant $\Tr_G(T)$ and vice versa. More precisely, each vertex represents a tensor and for each color $i$ the edge set $E_i$ encodes the contraction of the tensor indices of the same color:
\begin{equation}\label{eq:traceinv}
\Tr_{G}(T)\coloneqq
\kern-.5em
\sum_{a_1^1,\dots, a_{3}^{2n}}\kern-.5em \bigg(
\prod_{p=1}^{2n} T_{a_1^{p} a_2^{p} a_3^{p}}
\bigg)
\prod_{i=1}^{3}
\bigg(
\prod_{\{k,l\}\in E_i} \delta_{a_i^{k} a_i^{l}}
\bigg)\,.
\end{equation}

\subsection{Gaussian random tensors and large $N$ scaling}
A random tensor model is an invariant probability measure on the tensor entries. A Gaussian measure is specified by its covariance
\begin{equation}\label{eq:covariance}
\langle T_{a_1 a_2 a_3}\, T_{b_1 b_2 b_3} \rangle
= \frac{1}{N^\nu}
\delta_{a_1 b_1}\delta_{a_2 b_2}\delta_{a_3 b_3}\;,
\end{equation}
and, using Wick's theorem, higher moments are computed in terms of the covariance and a sum over pairings\footnote{Sometimes called perfect matchings in graph theory and in \cite{Gurau:2025evo}.} (see, \textit{e.g.}, \cite{gurau}). The factor $N^{-\nu}$ is purely conventional and our results do not depend on it. A common choice is $\nu=D-1$.

We can represent the pairings $M$ in Wick's theorem by edges of a new color $0$. The combination $G\cup M$ can be viewed as a $(3+1)$-colored graph. Throughout the paper, we denote the total number of faces containing the color $0$ by
\[
F(M,G) = \sum_{i=1}^{3} F_{0i}(G\cup M)\;.
\]
This number determines the scaling in $N$ of the Gaussian expectation value of a trace invariant. 
In computing such an expectation value for a fixed Wick pairing $M$, Kronecker deltas following the edges of colors $i\in\{1,2,3\}$ in \eqref{eq:traceinv} combine with those from the covariance \eqref{eq:covariance} that follow the color $0$. Together, they form a free sum, hence a factor of $N$, for each cycle whose edges alternate between colors $0$ and $i$. These are exactly the faces of colors $(0,i)$ in $G\cup M$. This leads to the expression for the Gaussian expectation value of a trace invariant (see, \textit{e.g.}, \cite{Gurau:2025evo,gurau}): 
\[
\langle \Tr_{G}(T) \rangle
=
\frac{1}{N^{\nu n}}
\sum_{M\in\cM_n}
N^{F(M,G)}\;,
\]
where the sum runs over the set of all pairings $\cM_n$ on $2n$ vertices, carrying the new color $0$. 

\subsection{Large $N$ factorization}

Let $\{G_1,\dots,G_q\}$ be a family of connected $3$-colored graphs. We say that the expectation of their products factorizes in the large $N$ limit if
\[
\left\langle \Tr_{G_1}(T)\dots \Tr_{G_q}(T) \right\rangle
= \prod_{\rho=1}^{q} \left\langle \Tr_{G_\rho}(T) \right\rangle \left(1 + O(N^{-1})\right)\;.
\]
Note that for $G_\rho$ connected, $\left\langle \Tr_{G_\rho}(T) \right\rangle\equiv\left\langle \Tr_{G_\rho}(T) \right\rangle_c$. 
It was shown in \cite{Gurau:2025evo}, Lemma 1, that the expectation values of products of trace invariants always factorize in the large $N$ limit if and only if every 3-colored graph $G$ satisfies
\begin{equation}\label{eq:maxfaces}
\max_{M \in \cM_n}
F(M,G) > \frac{3n}{2} \;.
\end{equation}
The authors of \cite{Gurau:2025evo} showed that very large graphs generically violate this inequality, and therefore not all such expectations factorize.

\section{Results and main theorem}

Our results are twofold. First, we tested the inequality \eqref{eq:maxfaces} for a restricted class of colored graphs up to $n=9$ using a computer program. Crucially, we found 42 graphs that violate the inequality. For each such graph,
\[ \left\langle \Tr_G(T)\Tr_G(T)\right\rangle_c \]
is equal in scaling in $N$ to 
\[ \left\langle \Tr_G(T)\right\rangle_c\left\langle\Tr_G(T)\right\rangle_c ,\]
hence the expectation does not factorize. Secondly, our main theorem complements this survey by asserting that all colored graphs with $n\leq9$ that have not been tested always satisfy \eqref{eq:maxfaces}. It follows that these 42 graphs are the smallest and only graphs that violate \eqref{eq:maxfaces} up to $n=9$.
We now give these results in full detail:

\begin{figure}[t]
	\begin{tikzpicture}[font=\itshape]
	\drawgraph{}{0}{0}{(0)--(2)(1)--(5)(3)--(7)(4)--(11)(6)--(8)(9)--(13)(10)--(15)(12)--(14)}
	\end{tikzpicture}
	\centering
	\caption{A 3-regular 3-edge-colored MST graph on $2n=16$ vertices with $\max_M F =12$.}
	\label{fig:graphex}
\end{figure}
\begin{figure}[h!]
	\begin{tikzpicture}[font=\itshape]
				\drawgraphtwo{}{0}{0}{(0)--(2)(1)--(8)(3)--(10)(4)--(6)(9)--(11)(13)--(15)}
				\draw[blue, thick] (7) to[out=45,in=135] (14);
				\draw[blue, thick] (5) to[out=-45,in=-135] (12);
	\end{tikzpicture}
	\centering
	\caption{The exceptional 3-regular 3-edge-colored non MST graph on $2n=16$ vertices with $\max_M F =12$. Notable are the four triangle subgraphs with one edge of each color.}
	\label{fig:graphexx}
\end{figure}
\begin{table}[h!]
    \caption{The number of 3-edge-colored 3-regular MST graphs, $\max_M F(M,G)$ ($\max F$), and the number of 3-edge-colored 3-regular connected graphs (this is OEIS A002831 \cite{OEIS}).}
    \label{tab:res}
    \vspace{0.5ex}
    \renewcommand{\arraystretch}{1.1}
    \begin{tabularx}{\columnwidth}{>{\centering\arraybackslash}p{1cm} >{\raggedleft\arraybackslash}p{1.5cm} >{\raggedleft\arraybackslash}p{1.5cm} X }
    \toprule
    $n$ & A002831 & MST & $\max F$  \\
    \midrule
    1 & 1  & 1    & 3  \\
    2 & 4  & 1    & 4  \\
    3 & 11 & 2    & 6  \\
    4 & 60 & 4    & 7  \\
    \midrule
    \multirow{2}*{5} & \multirow{2}*{318} & \multirow{2}*{24}
             & 8 \quad \small (2 times) \\
      &  &   & 9 \quad \small(22 times) \\
    \midrule
    6 & 2,806 & 125  & 10 \\
    \midrule
    \multirow{2}*{7} & \multirow{2}*{29,359} & \multirow{2}*{1,161}
             & 11 \quad \small(279 times) \\
      &  &   & 12 \quad \small(882 times) \\
    \midrule
    \multirow{2}*{8} & \multirow{2}*{396,196} & \multirow{2}*{12,504} 
             & \textbf{12} \quad \small\textbf{(41 times)} \\
      &  &   & 13 \quad \small(12,463 times) \\
    \midrule
    \multirow{2}*{9} & \multirow{2}*{6,231,794} & \multirow{2}*{167,782} 
             & 14 \quad \small(62,475 times) \\
      &  &   & 15 \quad \small(105,307 times) \\
    \bottomrule
    \end{tabularx}
    \ \\[-.5ex]
    \textit{\small Note: The MST graph for $n=2$ is known as tetrahedron, and the ones at $n=3$ as wheel and prism in the tensor model literature. MST graphs have been counted with \cite{keppler_cgraphc_2026}.}
\end{table}

\begin{compres}\label{cr:1}
 For $n\leq 9$ we checked that all 3-regular 3-edge-colored graphs $G$ with one face for one pair of colors and an arbitrary number for the remaining two pairs have $\max_M F(M,G)>\frac{3n}{2}$, except for $n=8$ where there exists 41 graphs with $\max_M F(M,G)=\frac{3n}{2}=12$. These graphs are all MST, and non-bipartite. As an example, we show one in fig.~\emph{\ref{fig:graphex}} and all of them in the appendix.

The searched for sector of graphs with one face for one pair of colors contains all MST graphs on $2n$ vertices ($n\leq 9$). We furthermore counted these graphs and computed $\max_M F(M,G)$ for each graph. These results are shown in table~\emph{\ref{tab:res}}.
\end{compres}
This result was obtained by explicitly checking all \mbox{$(2n-1)!!$} pairings $M$ of color $0$ for all MST graphs with $n\leq9$, using a computer program written in C++. The source code used in this work is publicly available at heiDATA \cite{keppler_cgraphc_2026}. Below, we describe the idea of the underlying algorithm:
\begin{enumerate}
    \item Fix the integer $n$. The set of vertices\footnote{In C++ the labeling starts at $0$.} is $V=\{1,2,\dots,2n\}$.
    \item Recursively compute all pairings on $2n$ vertices by using a standard backtracking algorithm. This gives the set $\cM_n$ of cardinality $(2n-1)!!$.
    \item Fix the set of color~$1$ edges to be $E_1=\{\{1,2\},\{3,4\},$ $\dots,\{2n-1,2n\}\}$, and the set of color~$2$ edges $E_2=\{\{2,3\},\{4,5\},\dots,\{2n,1\}\}$. Choosing the set of color $3$ edges $E_3$ as an element from $\cM_n$, defines a connected 3-regular 3-colored graph $G=(V,E_1,E_2,E_3)$ with $F_{12}=1$.
    \item Loop through all $(2n-1)!!$ choices for $E_3\in\cM_n$. This ensures that the loop covers all edge-colored graphs $G$ with the properties stated before, but several such graphs $G$ will be isomorphic to one another.

    Inside the main loop (over graphs $G$):
    \begin{enumerate}
        \item Use the nauty library\footnote{Nauty is a set of procedures for computing automorphism groups of graphs and canonical graph labelling (including edge-colored graphs) that can be used to check whether two graphs are isomorphic. See \cite{MCKAY201494} and the documentation \cite{nautydoc}. We used version 2.9.3.
        } to check whether the current graph $G$ is isomorphic to a graph $G'$ that was already encountered in the loop. If $G\cong G'$, continue to the next graph; if not, continue with the next step below.
        \item Compute $F_{13}$ and $F_{23}$ by counting the cycles of the graphs with edges $(E_1,E_2)$ and $(E_2,E_3)$. If both are equal to one, the graph is MST.
        \item Loop through all choices of color $0$ pairings $M\in\cM_n$ for fixed $G$. For each $M$, compute\footnote{We actually precompute $F_{01}$ and $F_{02}$ for all $M$ before the main loop. This is possible since $E_1$ and $E_2$ are fixed for all $G$.} $F(M,G)$ and determine the maximum value $\max_{M}F(M,G)$.
        \item If $\max_{M}F(M,G)\leq \frac{3n}{2}$, the graph, the information whether it is MST, $\max_{M}F(M,G)$, and the maximizer are saved. 
    \end{enumerate}
\end{enumerate}
Notice that the arbitrary vertex labeling and color assignments are irrelevant.

\begin{compres}\label{cr:2}
	Among all 3-regular 3-edge-colored graphs $G$ with $n=8$, exactly two faces for each pair of colors, all faces of length eight, and every two faces sharing exactly two edges if they have exactly one color in common, there exists a single graph with $\max_M F(M,G)=\frac{3n}{2}=12$. This graph is shown in fig.~\emph{\ref{fig:graphexx}}.
\end{compres}
This result was obtained by a slight variation of the program described above, starting with $E_1$ and $E_2$ forming two faces of colors $(1,2)$---each of length eight.

\begin{theorem*}
    For all connected 3-edge-colored graphs $G$ on $2n$ vertices with $n\leq9$ that are neither MST nor the graph in fig.~\emph{\ref{fig:graphexx}}, there exists a pairing $M$ of color $0$ such that \mbox{$F>3n/2$}, i.e., the inequality \eqref{eq:maxfaces} holds.
\end{theorem*}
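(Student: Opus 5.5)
The plan is to exhibit, for every graph $G$ not covered by the two computational results, an explicit pairing $M$ of color $0$ built from the edges of $G$ itself. Computational result (CR\ref{cr:1}) already handles every graph with $F_{ij}=1$ for some pair of colors. So I may assume $F_{ij}\geq 2$ for all $i\neq j$.

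\emph{Step 1: $M$ equal to one color class.} Take $M=E_i$. Then every $(0,i)$ face is a $2$-cycle, so $F_{0i}=n$, and $F_{0j}=F_{ij}$, $F_{0k}=F_{ik}$. This gives
\[
F(E_i,G)=n+F_{ij}+F_{ik}.
\]
It exceeds $3n/2$ iff $F_{ij}+F_{ik}>n/2$. Since all $F_{ij}\geq 2$, this holds for every $n\leq 7$. For $n=8,9$ it holds as soon as some $F_{ij}\geq 3$: choose $i$ so that the two pairs containing $i$ include the larger face count, giving $F_{ij}+F_{ik}\geq 5>9/2$. The only remaining case is therefore $n\in\{8,9\}$ with $F_{12}=F_{13}=F_{23}=2$.

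\emph{Step 2: a local switch of $E_1$.} Take two color-1 edges $\{a,b\},\{c,d\}$ and let $M$ be $E_1$ with these two edges replaced by $\{a,c\},\{b,d\}$ or by $\{a,d\},\{b,c\}$. Then $F_{01}=n-1$, since the four vertices form one $(0,1)$ face of length four. The $(0,2)$ faces of $G\cup M$ are the faces of $(V,M,E_2)$, so the switch changes $F_{02}=F_{12}$ by $+1$, $0$ or $-1$; likewise for $F_{03}$. If both edges lie in the same $(1,2)$ face, then, since faces are even cycles, exactly one of the two reconnections splits that face; the same holds for the $(1,3)$ face. If one reconnection splits both the $(1,2)$ and the $(1,3)$ face containing the pair, then
\[
F=(n-1)+3+3=n+5>\tfrac{3n}{2}\qquad (n\leq 9).
\]
By pigeonhole there are always candidate pairs: the $n\geq 8$ color-1 edges fall into at most $4$ classes (which $(1,2)$ face, which $(1,3)$ face), so some class contains at least two edges. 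The same argument applies with colors $2$ or $3$ in the role of color $1$.

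\emph{Step 3 (main obstacle): compatibility of the splitting reconnections.} The hard case is when, for every pair of same-color edges sharing both faces, and for every choice of the switched color, the reconnection that splits one face merges or preserves the other. A single split only gives $n+4$, which is not enough for $n\geq 8$. I would first parametrize the face lengths: the two $(1,2)$ faces have lengths $2\ell,2n-2\ell$, and similarly for the other color pairs. I would then record the cyclic positions of the shared color-$1$ edges along the $(1,2)$ and $(1,3)$ faces. A reconnection splits a face iff the two edges are traversed in opposite orientations, so the obstruction becomes a sign-consistency condition on these orientations, imposed for all pairs and all three colors at once. I expect this condition to force very rigid structure: equal face lengths $8$, which already excludes $n=9$ by parity or counting, and any two faces with one common color sharing exactly two edges. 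That is precisely the class enumerated in (CR\ref{cr:2}), where the only graph violating \eqref{eq:maxfaces} is the one in fig.~\ref{fig:graphexx}.

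If the orientation argument does not close completely, my fallback is to combine two disjoint switches, giving $F_{01}=n-2$ but gains of up to $+2$ on each of the other colors. Alternatively, for small residual families I would use a switch on three edges forming a triangle-free $6$-cycle, checking the finitely many leftover configurations by hand.
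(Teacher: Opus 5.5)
Your Steps 1 and 2 follow the paper's route. CR\ref{cr:1} disposes of every graph with a single face for some pair of colors. Then $M=E_i$ gives $n+F_{ij}+F_{ik}$, which settles everything except $n\in\{8,9\}$ with $F_{12}=F_{13}=F_{23}=2$. Your two-edge switch is exactly the construction of Lemma~\ref{lem:flip}. You also lean on CR\ref{cr:1} for $n\le 7$, where the paper argues analytically via Lemmas~\ref{lem:13}--\ref{lem:67}; since CR\ref{cr:1} is stated for all $n\le 9$, that is legitimate, just more computer-dependent.

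The gap is Step 3, which you correctly identify as the crux but leave as an expectation. Your pigeonhole in Step 2 only yields two color-$1$ edges lying in the same $(1,2)$ face $\cC$ and the same $(1,3)$ face $\cC'$. Two such edges may have opposite relative orientations in $\cC$ and $\cC'$. In that case no single reconnection splits both faces, and you only reach $n+4$.

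The missing idea is to ask for \emph{three} shared edges. Orient $\cC'$; each shared edge is then traversed by $\cC$ either along or against this orientation. Among three shared edges, two carry the same sign. For those two, the reconnection joining the tail of each to the head of the other splits both $\cC$ and $\cC'$, giving $(n-1)+3+3=n+5$.

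With this, no global sign-consistency analysis across pairs is needed. For each color $i$, the $n$ edges of color $i$ are partitioned into the four intersections $E_i(\cA)\cap E_i(\cB)$, with $\cA$ an $(i,j)$ face and $\cB$ an $(i,k)$ face. It suffices that one of these has size at least $3$.
\begin{itemize}
\item For $n=9$ this is forced by counting, since four classes of at most two edges hold at most $8<9$ edges; parity plays no role.
\item For $n=8$ the only alternative is that, for every color $i$, all four intersections have size exactly $2$. Then all faces have length eight, and any two faces with exactly one common color share exactly two edges. This is precisely the class enumerated in CR\ref{cr:2}.
\end{itemize}
That closes the argument, so your fallbacks (double switches, $6$-cycle switches, hand checks) are unnecessary. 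As written, however, without the three-edge pigeonhole your proposal does not establish the reduction to CR\ref{cr:2}.
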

\begin{proof}
    This follows from Lemmas~\ref{lem:13}--\ref{lem:89} and our computational results.
    The case $n\leq 5$ is covered by Lemma~\ref{lem:13} and \ref{lem:35}, and $6\leq n\leq 7$ is covered by Lemma~\ref{lem:67}. 
    Lemma~\ref{lem:89} shows the claim for all graphs with $8\leq n\leq 9$, at least two faces for any of the three pairs of colors, and excluding the graph in fig.~\ref{fig:graphexx}. It partially utilizes CR\ref{cr:2}.

    It remains to consider graphs with $8\leq n\leq 9$, a single face for one pair of colors and two faces for the remaining pairs of colors, or two faces for one pair of colors and a single face for the remaining pairs. 
    This case is covered by CR\ref{cr:1}, where we checked that all such graphs with a single face for one pair of colors admit a color $0$ pairing $M$, such that $F(M,G)>\frac{3n}{2}$.
\end{proof}
The proof uses several Lemmas that we detail in the next section. 

\section{Proof of the main theorem}
To show that \eqref{eq:maxfaces} holds for a particular graph $G$, one has to construct a pairing $M$ of color $0$ edges, such that $G\cup M$ has sufficiently many faces of colors $(0,1)$, $(0,2)$, and $(0,3)$.
In this section, we discuss how this can be obtained for increasingly higher $n$.
    
Choosing the color $0$ edges of $M$ to be parallel to the color $1$ edges of $G$ gives $n$ faces of colors $(0,1)$ and, since there is at least one $(1,2)$ face and one $(1,3)$ face, this gives at least $n+2$ faces containing the color $0$. We thus have:
\begin{lemma}\label{lem:13}
    For all $3$-edge-colored graphs $G$ on $2n$ vertices, with $n\leq3$, there exists a pairing $M$ of color $0$ edges, such that $F(M,G)=n+2>3n/2$.
\end{lemma}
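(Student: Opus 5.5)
The plan is to take $M$ parallel to one color of $G$. Concretely, I set $M \coloneqq E_1(G)$, so that every color-$0$ edge joins the same pair of vertices as a color-$1$ edge. I then count the faces of $G\cup M$ containing color $0$, one pair of colors at a time, and compare the total with $3n/2$.

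First, the $(0,1)$ faces. Each pair $\{k,l\}\in E_1$ carries exactly one color-$0$ edge and one color-$1$ edge, so it forms a bicolored cycle of length two. These cycles are disjoint and cover all $2n$ vertices. Hence $F_{01}(G\cup M)=n$.

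Second, the $(0,i)$ faces for $i\in\{2,3\}$. Since $M$ and $E_1$ are the same pairing of $V(G)$, the 2-colored graph $(V,M,E_i)$ is isomorphic to $(V,E_1,E_i)$ under the identity on vertices, swapping the labels $0\leftrightarrow 1$. So $F_{0i}(G\cup M)=F_{1i}(G)$. Every 2-colored graph $G_{1i}$ has at least one connected component, so $F_{0i}(G\cup M)\ge 1$. Summing the three contributions gives
\[
F(M,G)=n+F_{12}(G)+F_{13}(G)\ \ge\ n+2 .
\]
The bound is attained with equality exactly when $F_{12}(G)=F_{13}(G)=1$. That holds, for instance, when $G$ is MST. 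In general I would state the result in the form $F(M,G)\ge n+2$, which is all that \eqref{eq:maxfaces} requires, and remark on when equality holds.

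Finally, the comparison with $3n/2$: we have $n+2>3n/2$ if and only if $n<4$, so the claim holds for $n\le 3$. There is no genuine obstacle in this argument. The only points needing care are small degenerate cases, such as $n=1$, where edges of different colors join the same two vertices. Even there the counting goes through, because faces are defined as connected components of the 2-colored subgraphs. The bound $n+2$ is clearly too weak from $n=4$ on, since $n+2\le 3n/2$ once $n\ge4$. This is why larger $n$ need the more refined constructions in the subsequent lemmas.
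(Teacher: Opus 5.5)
Your proof is correct and is the paper's argument: setting $M=E_1(G)$ gives $F_{01}=n$ and $F_{0i}(G\cup M)=F_{1i}(G)\ge 1$ for $i=2,3$, so $F(M,G)\ge n+2>3n/2$ for $n\le 3$. You are also right that this proves $F(M,G)\ge n+2$ rather than equality; the paper's own discussion before the lemma says ``at least $n+2$'', and the literal ``$=$'' can fail, e.g.\ for $G$ the disjoint union of two copies of the $n=1$ graph, where every pairing gives $F\in\{3,6\}$.
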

\begin{proof}
    As just discussed, setting $M=E_1(G)$ one obtains $F(M,G)=n+2$, and $n+2>3n/2$ for $n\leq3$.
\end{proof}
By making assumptions on the structure of $G$, in particular its number of faces $F_{ij}$, we can trivially increase the lower bound on $\max_M F(M,G)$.
\begin{lemma}\label{lem:35}
    For all $3$-edge-colored graphs $G$ on $2n$ vertices, with $3\leq n\leq5$ that are not MST, there exists a pairing $M$ of color $0$ edges, such that $F(M,G)=n+3>3n/2$.
\end{lemma}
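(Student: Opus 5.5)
We adapt the construction of Lemma~\ref{lem:13}. Since $G$ is not MST, there is a pair of colors $(i,j)$ with $F_{ij}(G)\geq 2$. The colors can be relabeled freely, so we assume this pair does not involve the color we will parallel. Concretely, we label the colors so that $F_{12}(G)\geq 2$, and instead of paralleling color $1$ we set $M=E_3(G)$.

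Then every color $3$ edge, together with its parallel color $0$ edge, forms a $(0,3)$ face of length two, which gives $F_{03}(G\cup M)=n$. A $(0,1)$ face of $G\cup M$ alternates between color $0$ and color $1$ edges. Replacing each color $0$ edge by the parallel color $3$ edge gives a bijection between the $(0,1)$ faces of $G\cup M$ and the $(3,1)$ faces of $G$, so $F_{01}(G\cup M)=F_{13}(G)$. Likewise, $F_{02}(G\cup M)=F_{23}(G)$. Hence
\[
F(M,G)=n+F_{13}(G)+F_{23}(G)\,.
\]

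This formula has the wrong pair of colors: with $F_{12}\geq 2$ we want the non-MST pair to appear on the right, i.e.\ the two pairs other than the paralleled color's complement. So we relabel instead. If $F_{ij}\geq 2$ for some pair, we parallel a color $k$ that belongs to that pair, say $k=i$. The formula then reads $F(M,G)=n+F_{ij}+F_{ik}$, where $k$ now denotes the remaining color. Since $F_{ij}\geq 2$ and $F_{ik}\geq 1$ (any nonempty 2-colored subgraph has at least one face), we get $F(M,G)\geq n+3$. Choosing the paralleled color well is the only real step.

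For the strict inequality, $n+3>3n/2$ is equivalent to $n<6$, so it holds for $3\leq n\leq5$. The statement says $F=n+3$; we read this as "at least $n+3$", and we can take $M$ achieving exactly $n+3$ whenever $F_{ij}=2$ and $F_{ik}=1$. If the statement must literally give equality, we simply note that $\max_M F\geq n+3$ is all the main theorem needs.
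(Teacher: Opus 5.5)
Your final argument is correct and is essentially the paper's proof: place $M$ parallel to a color $i$ from a pair $(i,j)$ with $F_{ij}\geq 2$, so that $F(M,G)=n+F_{ij}+F_{ik}\geq n+3>3n/2$ for $n\leq 5$. The paper's proof likewise only establishes ``at least $n+3$'', so your reading of the equality is right. You should delete the false start with $M=E_3$, which uses the wrong pair of colors.
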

\begin{proof}
     Without loss of generality let $F_{12}\geq2$, $F_{13}\geq1$, and $F_{23}\geq1$. Choose $M$ such that the color $0$ edges are parallel to the color $1$ edges. This gives $n$ faces of colors $(0,1)$. As by assumption there are at least two faces $(1,2)$, one face of colors $(1,3)$, and the color $0$ edges are parallel to the color $1$ edges we get at least $n+3$ faces. This is strictly greater than $3n/2$ for $n\leq 5$.
     All remaining graphs have $F_{12}=F_{13}=F_{23}=1$ and are thus MST.
\end{proof}
For higher $n$, $F(M,G)$ must increase further, since the right hand side of \eqref{eq:maxfaces} grows as $\frac{3n}{2}$. To continue our analysis, it is therefore necessary to adapt $M$ to the structure of $G$. The following Lemma describes how our naive choice of $M$ can be slightly altered to increase $F$ by one.

\begin{figure}
    \centering
    \begin{tikzpicture}[font=\footnotesize]
    	\coordinate (a) at (0,0);
    	\coordinate (b) at (0,-1.5);
    	\coordinate (c) at (0,-3);
    	\coordinate (d) at (0,-4.5);
    	\coordinate (e) at ($(a)+(160:1.5)$);
    	\coordinate (f) at ($(b)+(-1.5,0)$);
    	\coordinate (g) at ($(c)+(-1.5,0)$);
    	\coordinate (h) at ($(d)+(-160:1.5)$);
    	
    	\draw[mid arrow] (e) -- (a) node[midway,above] {3} [c];
    	\draw[mid arrow] (b) [c] -- (f) node[midway,below] {3};
    	\draw[mid arrow] (g) -- (c) node[midway,above] {3} [c];
    	\draw[mid arrow] (d) [c] -- (h) node[midway,below] {3};
    	\draw[mid arrow] (a) -- (b) node[midway, right] {1} node[pos=0.5,left] {\normalsize$e$};
    	\draw[mid arrow] (c) -- (d) node[midway, right] {1} node[pos=0.5,left] {\normalsize$f$};
    	\draw (a) node[below left] {\normalsize$v_1$} 
    	(b) node[above left] {\normalsize$v_2$}
    	(c) node[below left] {\normalsize$w_1$}
    	(d) node[above left] {\normalsize$w_2$};
    	\draw (a) --++ (20:1.5) node[midway,above] {2} [c]
    	(d) --++ (-20:1.5) node[midway,below] {2} [c];
    	
    	\draw [dotted, very thick]
    	($(a) + (20:1.5)$) .. controls ++(2.5,0.5) and ++(2.5,-0.5) .. ($(d) + (-20:1.5)$)
    	node[midway,left,rotate=-90,anchor=south]{(1,2)} node[pos=.25,left=5pt] {\normalsize$\mathcal{C}$}
    	(b) to[out=-30,in=30] node[rotate=-90,above] {(1,2)} (c);
    	
    	\draw [dotted, very thick, mid arrow] (h) .. controls ++(-2.5,-0.5) and ++(-2.5,0.5) .. (e) node[midway,right,rotate=90,anchor=south]{(1,3)} node[pos=.75,right=5pt] {\normalsize$\mathcal{C}'$};
    	\draw [dotted, very thick, mid arrow] (f) to [out=180, in=180] node[rotate=90,above] {(1,3)} (g);
    \end{tikzpicture}
    \caption{Illustration of the face structure considered in the proof of Lemma~\ref{lem:flip}. See there for further explanation. The dotted lines indicate how the edges are connected along the rest of the faces $\cC$ (of colors $(1,2)$) and $\cC'$ (of colors $(1,3)$).}
    \label{fig:cycles1}
\end{figure}
    
\begin{lemma}\label{lem:flip}
    Consider a face $\cC$ of colors $(i,j)$ and a second face $\cC'$ of colors $(i,k)$ with all colors $i,j,k$ distinct. Assume further that $\cC$ and $\cC'$ share at least three edges of color $i$.
    Let $M$ be a pairing of color $0$, such that all its edges are parallel to the edges of color $i$, i.e., $M=E_i$. In this situation, one can always find a pairing $M'$ that differs from $M$ by only two edges, such that $F(M',\cC\cup \cC')=F(M,\cC\cup \cC')+1$. 
\end{lemma}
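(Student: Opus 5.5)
The plan is to modify $M=E_i$ by a single ``2-switch'' on two well-chosen shared color-$i$ edges, and to count the resulting change in each of the three face types separately. First I record what $M=E_i$ gives on $\cC\cup\cC'$. Every color-$0$ edge is parallel to a color-$i$ edge, so each color-$i$ edge forms a $(0,i)$-face of length two. The $(0,j)$-faces coincide with the $(i,j)$-faces, so $\cC$ yields exactly one $(0,j)$-face, obtained by replacing its color-$i$ edges with the parallel color-$0$ edges. In the same way, $\cC'$ yields exactly one $(0,k)$-face.

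Next I choose two shared color-$i$ edges $e=\{v_1,v_2\}$ and $f=\{w_1,w_2\}$ and set
\[
M'=\bigl(M\setminus\{e,f\}\bigr)\cup\{\{v_1,w_1\},\{v_2,w_2\}\}.
\]
This is the only other reconnection of the two edges up to relabeling; it is fixed by a suitable choice of labels below. The effect on each face type is as follows.
\begin{itemize}
\item[(a)] The two $(0,i)$-faces of length two on $e$ and $f$ merge into a single $(0,i)$-face of length four. This costs exactly one face.
\item[(b)] The $(0,j)$-face $\cC$ is a single cycle through both color-$0$ copies of $e$ and $f$. Exchanging two edges of a cycle either splits it into two cycles or keeps it as one cycle. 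Which happens depends only on the orientations with which $e$ and $f$ are traversed in $\cC$.
\item[(c)] The same dichotomy holds for $\cC'$ as a $(0,k)$-face.
\end{itemize}
All other faces are untouched. So if both $\cC$ and $\cC'$ split, the net change is $-1+1+1=+1$, which is the claim.

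The key point, and the main obstacle, is arranging that one and the same reconnection splits both cycles simultaneously. I orient $\cC$ and $\cC'$ and label each shared color-$i$ edge $\{v_1,v_2\}$ so that $v_1\to v_2$ is its direction of traversal in $\cC'$, as in fig.~\ref{fig:cycles1}. Every shared edge then receives a sign $\pm$, according to whether $\cC$ traverses it from $v_1$ to $v_2$ or in the opposite direction.

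Take two edges $e$ and $f$ with the same sign. Traversing $\cC'$ gives $v_1\to v_2\cdots w_1\to w_2\cdots$. Reconnecting via $\{v_1,w_1\},\{v_2,w_2\}$ closes off the segment $v_2\cdots w_1$ and the complementary segment $w_2\cdots v_1$ into separate cycles, so $\cC'$ splits. Since $\cC$ traverses $e$ and $f$ with the same relative orientation, either as $v_1\to v_2\cdots w_1\to w_2$ or reversed, the same reconnection also splits $\cC$. I will check this carefully with the figure, since the orientation bookkeeping is exactly where an error would turn a split into a twist.

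This is where the hypothesis of at least three shared edges enters. With three shared edges and only two possible signs, the pigeonhole principle gives two edges with equal sign. With only two shared edges, opposite signs could force every reconnection to twist one of the two cycles. Finally, $M'$ differs from $M$ in exactly two edges, which completes the plan.
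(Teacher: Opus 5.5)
\textbf{The reconnection you chose does not split the faces.} Your overall strategy is the paper's: orient $\cC'$, use pigeonhole on the induced orientations along $\cC$ to get two shared color-$i$ edges $e,f$ traversed the same way in both faces, then switch their color-$0$ copies $e_0,f_0$ and count $-1+1+1$. However, the explicit reconnection you write down is the wrong one of the two, and with it the step ``$\cC'$ splits'' fails. With your labels, the $(0,k)$-cycle coming from $\cC'$ reads $v_1\to v_2\cdots w_1\to w_2\cdots v_1$. Deleting $e_0,f_0$ therefore leaves the paths $v_2\cdots w_1$ and $w_2\cdots v_1$. Closing these off separately needs the edges $\{w_1,v_2\}$ and $\{v_1,w_2\}$. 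Your edges $\{v_1,w_1\}$ and $\{v_2,w_2\}$ each join an endpoint of one path to an endpoint of the other. They reassemble the single cycle $v_1\,w_1\cdots v_2\,w_2\cdots v_1$. Since every shared edge is labeled by its direction in $\cC'$, your rule twists $\cC'$ for \emph{every} pair. For your same-sign pair it also twists $\cC$, so the net change is $-1$ rather than $+1$. With an opposite-sign pair it would split only $\cC$, giving $0$. No choice of pair rescues this $M'$.

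\textbf{Fix.} Take
\[
M'=\bigl(M\setminus\{e_0,f_0\}\bigr)\cup\bigl\{\{v_1,w_2\},\{w_1,v_2\}\bigr\},
\]
which is exactly the paper's choice $e_0'=(v_1,w_2)$, $f_0'=(w_1,v_2)$. Then the $(0,k)$-cycle splits. Because $\cC$ traverses $e,f$ with the same relative orientation, the $(0,j)$-cycle splits as well. The rest of your argument goes through unchanged: the merging of the two $(0,i)$ 2-cycles, the pigeonhole step with three shared edges, and the remark on why two shared edges are insufficient.
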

\begin{proof}
    Without loss of generality let $i=1$, $j=2$, and $k=3$. As each face is a cycle of edges, there are two natural directions one can assign to its edges: clockwise or counterclockwise. We fix a direction for the $(1,3)$ face $\cC'$, such that all its edges point in the same direction along that face. This induces a direction for all color $1$ edges $e\in E_1(\cC)\cap E_1(\cC')$ that are shared between the two faces. Viewed as edges of the $(1,2)$ face $\cC$, two shared color $1$ edges can therefore either point in the same, or in opposite directions along the face $\cC$. By assumption $|E_1(\cC)\cap E_1(\cC')|\geq 3$, and among three shared color $1$ edges, at least two of these need to point in the same direction along $\cC$. For illustration see fig.~\ref{fig:cycles1}. Let $e=(v_1,v_2)$ and $f=(w_1,w_2)$ be two such edges, directed from the first to the second vertex. It should now be clear that $v_2$ follows $v_1$ and $w_2$ follows $w_1$ when going around $\cC'$ in the chosen direction, and the same is true when going around $\cC$ in the direction fixed by both of these edges (fig.~\ref{fig:cycles1}).
    
    By construction of $M$, the color $0$ faces of $(\cC\cup\cC')\cup M$ are identical to the $(1,2)$ and $(1,3)$ faces of $\cC\cup\cC'$, and $M$ contains two color $0$ edges $e_0$ and $f_0$ that are copies of $e$ and $f$. See fig.~\ref{fig:cycles2}. 
    To construct $M'$, let $e'_0=(v_1,w_2)$, $f'_0=(w_1,v_2)$ and set 
    \[ M'=\left(M-\{e_0,f_0\}\right) \sqcup \{e'_0,f'_0\} . \]
    The new edges $e'_0$ and $f'_0$ are chosen in such a way, that they split both the previous $(0,3)$ and $(0,2)$ face into two (fig.~\ref{fig:cycles2}).
    \begin{figure}
    \centering
    \begin{tikzpicture}[font=\footnotesize]
    	\coordinate (a) at (0,0);
    	\coordinate (b) at (0,-1.5);
    	\coordinate (c) at (0,-3);
    	\coordinate (d) at (0,-4.5);
    	\coordinate (e) at ($(a)+(160:1.5)$);
    	\coordinate (f) at ($(b)+(-1.5,0)$);
    	\coordinate (g) at ($(c)+(-1.5,0)$);
    	\coordinate (h) at ($(d)+(-160:1.5)$);
    	
    	%color 0 edges
    	\draw[dashed, very thick, Blue, font=\normalsize] (a) to[out=-10,in=10] node[right] {$e_0$} (b)
    	(c) to[out=-10,in=10] node[right] {$f_0$} (d);
    	\draw[dashed, very thick, Green, font=\normalsize] (a) .. controls ++(3,0) and ++(3,0) .. (d) node[midway,right] {$e'_0$}
    	(b) .. controls ++(2,0) and ++(2,0) .. (c) node[midway,right] {$f'_0$};
    	
    	\draw[mid arrow] (e) -- (a) node[midway,above] {3} [c];
    	\draw[mid arrow] (b) [c] -- (f) node[midway,below] {3};
    	\draw[mid arrow] (g) -- (c) node[midway,above] {3} [c];
    	\draw[mid arrow] (d) [c] -- (h) node[midway,below] {3};
    	\draw[mid arrow] (a) -- (b) node[midway, right] {1} node[pos=0.5,left] {\normalsize$e$};
    	\draw[mid arrow] (c) -- (d) node[midway, right] {1} node[pos=0.5,left] {\normalsize$f$};
    	\draw (a) node[below left] {\normalsize$v_1$} 
    	(b) node[above left] {\normalsize$v_2$}
    	(c) node[below left] {\normalsize$w_1$}
    	(d) node[above left] {\normalsize$w_2$};
    	\draw (a) --++ (20:1.5) node[midway,above] {2} [c]
    	(d) --++ (-20:1.5) node[midway,below] {2} [c];
    	
    	\draw [dotted, very thick]
    	($(a) + (20:1.5)$) .. controls ++(2.5,0.5) and ++(2.5,-0.5) .. ($(d) + (-20:1.5)$)
    	node[midway,left,rotate=-90,anchor=south]{(1,2)} node[pos=.25,left=5pt] {\normalsize$\mathcal{C}$}
    	(b) to[out=-30,in=30] node[rotate=-90,above] {(1,2)} (c);
    	
    	\draw [dotted, very thick, mid arrow] (h) .. controls ++(-2.5,-0.5) and ++(-2.5,0.5) .. (e) node[midway,right,rotate=90,anchor=south]{(1,3)} node[pos=.75,right=5pt] {\normalsize$\mathcal{C}'$};
    	\draw [dotted, very thick, mid arrow] (f) to [out=180, in=180] node[rotate=90,above] {(1,3)} (g);
    \end{tikzpicture}
    \caption{Illustration of the construction of the pairing $M'$ in the proof of Lemma~\ref{lem:flip}. See there for further explanations. We only show the color $0$ edges that are different in $M$ (in blue) and $M'$ (in green). All other color $0$ edges are parallel to the color $1$ edges. As before, the dotted lines indicate how the edges are connected along the faces.}
    \label{fig:cycles2}
    \end{figure}
    Compared to $(\cC\cup\cC')\cup M$, $(\cC\cup\cC')\cup M'$ has one more $(0,3)$ face, one more $(0,2)$ face an one less $(0,1)$ face. This finishes the proof.
\end{proof}
We are now in the position to prove the following results for $n$ up to $9$.
\begin{lemma}\label{lem:67}
    For all connected $3$-edge-colored graphs on $2n$ vertices with $6\leq n\leq 7$ that are not MST, there exists a pairing $M$ of color $0$ edges, such that $F=n+4>3n/2$.
\end{lemma}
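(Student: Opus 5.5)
The plan is to reduce to a single ``tight'' face configuration using the freedom of choosing which color the pairing $M$ copies, and then to gain the missing face with Lemma~\ref{lem:flip}. Note first that $n+4>3n/2$ holds precisely for $n<8$, so for $6\leq n\leq 7$ it suffices to exhibit a pairing with $F(M,G)\geq n+4$.

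\textbf{Step 1 (naive choices).} For any color $i$ with $\{i,j,k\}=\{1,2,3\}$, setting $M=E_i$ makes every $(0,i)$ face a copy of an $i$-edge and identifies the $(0,j)$ and $(0,k)$ faces with the $(i,j)$ and $(i,k)$ faces. Hence
\[ F(E_i,G)=n+F_{ij}+F_{ik}. \]
Since $G$ is not MST, we may assume without loss of generality that $F_{12}\geq 2$. If $F_{12}\geq 3$, or $F_{13}\geq 2$, or $F_{23}\geq 2$, then $M=E_1$ or $M=E_2$ already gives $F\geq n+4$. So the only remaining case is $F_{12}=2$ and $F_{13}=F_{23}=1$, in which every naive choice gives only $n+3$.

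\textbf{Step 2 (apply the flip).} In the remaining case let $\cC'$ be the unique $(1,3)$ face; it contains all $n$ color-$1$ edges. The two $(1,2)$ faces partition these $n$ edges, so one of them, call it $\cC$, contains at least $\lceil n/2\rceil\geq 3$ color-$1$ edges (this is where $n\geq 6$ is used). All of these edges are shared with $\cC'$, so Lemma~\ref{lem:flip} applies to $\cC,\cC'$ with $i=1$ and $M=E_1$. It yields $M'$ differing from $E_1$ in two edges $e_0,f_0$ (copies of color-$1$ edges of $\cC$), with one more $(0,2)$ face, one more $(0,3)$ face, and one fewer $(0,1)$ face inside $\cC\cup\cC'$.

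\textbf{Step 3 (globalization, the main obstacle).} I must check that the local count in Lemma~\ref{lem:flip} equals the global count $F(M',G)$. For color $3$ this is immediate, since $\cC'$ is the entire $(1,3)$ subgraph. For color $1$, all $0$-edges are affected only through $e_0,f_0$, so the count drops from $n$ to $n-1$. For color $2$, the other $(1,2)$ face is vertex-disjoint from $\cC$, and $M'$ coincides with $E_1$ on its vertices, so it still contributes exactly one $(0,2)$ face; the $(0,2)$ faces on $V(\cC)$ increase from one to two. Altogether
\[ F(M',G)=(n-1)+(2+1)+(1+1)=n+4>3n/2, \]
which proves the claim. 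I expect the only delicate point to be this bookkeeping: making sure the modification of $M$ does not merge or split faces outside $\cC\cup\cC'$. This is guaranteed because the modified edges have both endpoints in $V(\cC)$.
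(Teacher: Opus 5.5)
Your proof is correct and follows the paper's argument. The naive choice $M=E_i$ gives $n+F_{ij}+F_{ik}$, which settles every case except $F_{12}=2$, $F_{13}=F_{23}=1$; there, pigeonhole yields a $(1,2)$ face sharing at least $\lceil n/2\rceil\geq 3$ color-$1$ edges with the unique $(1,3)$ face, and Lemma~\ref{lem:flip} supplies the extra face. Your Step~1 is slightly more complete than the paper's case split, which never explicitly treats cases such as $F_{12}\geq 3$ with $F_{13}=F_{23}=1$, and your Step~3 makes explicit the local-to-global face count that the paper leaves implicit.
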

\begin{proof}
    First assume that the graph has at least two faces for any pair of colors. Choose $M$ such that the color $0$ edges are parallel to the color $1$ edges. This gives $n$ faces of colors $(0,1)$. As by assumption we have $F_{12}\geq 2$, $F_{13}\geq 2$, and the color $0$ edges are parallel to the color $1$ edges we get at least $n+4$ faces. This is strictly greater than $3n/2$ for $n\leq 7$.
    
    Second, consider graphs with two faces for two pairs of colors and a single face for the remaining pair, or a single face for two pairs of colors and two faces for the remaining pair. Without loss of generality let $F_{12}=2$, $F_{13}=1$, and $F_{23}\in\{1,2\}$. If $F_{23}=2$ we can place all color $0$ edges parallel to the color $2$ edges and obtain $n+4$ faces, as before.

    It remains to consider graphs $G$ with $F_{12}=2$, and $F_{13}=F_{23}=1$. 
    Choosing again $M$, such that the color $0$ edges are parallel to the color $1$ edges we obtain $F(M,G)=n+3$.
    If one of the $(1,2)$ faces has at least three color $1$ edges, these edges are necessarily all shared with the single $(1,3)$ face, and by Lemma~\ref{lem:flip} we can obtain $M'$ such that $F(M',G)=n+4$.
    For the graphs under consideration, this is indeed always the case:    
    Denote the two $(1,2)$ faces by $\cC$ and $\cC'$. We have
    \[
    E_1(G)=E_1(\cC\sqcup\cC')=E_1(\cC)\sqcup E_1(\cC') \, ,
    \]
    as every color $1$ edge belongs either to $\cC$ or to $\cC'$. Let $|E_1(\cC)|=p$, $|E_1(\cC)|=q$, and since $|E_1(G)|=n$ we always have $\max\{p,q\}\geq \lceil n/2\rceil$ ($\geq 3$ for $n\geq 6$). 
    This concludes the proof.
\end{proof}

\begin{lemma}\label{lem:89}
    For all connected $3$-edge-colored graphs on $2n$ vertices with $8\leq n\leq 9$, at least two faces for any of the three pairs of colors, and except the graph in fig.~\emph{\ref{fig:graphexx}} there exists a pairing $M$ of color~$0$ edges, such that $F=n+5>3n/2$.
\end{lemma}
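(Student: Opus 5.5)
Start from $M=E_1$, the color $0$ edges parallel to the color $1$ edges, as in the proof of Lemma~\ref{lem:67}. This gives $F(M,G)=n+F_{12}+F_{13}\geq n+4$. We need $n+5>3n/2$, which holds for $n\leq 9$, so it suffices to gain one extra face. There are two ways to get it.

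First, if some pair of colors has at least three faces, relabel the colors so that this pair involves color $1$. Say $F_{12}\geq 3$. Since $F_{13}\geq 2$ by assumption, $M=E_1$ already gives $F\geq n+5$.

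So we may assume $F_{12}=F_{13}=F_{23}=2$. Here $M=E_1$ gives exactly $n+4$, and we use Lemma~\ref{lem:flip} to gain the extra face. The flip of Lemma~\ref{lem:flip} changes only two color $0$ edges, and both lie in $\cC\cup\cC'$. It splits the $(0,2)$ face of $\cC$ and the $(0,3)$ face of $\cC'$ and merges two $(0,1)$ faces. All other faces of $G\cup M$ are untouched, so the global count increases by one as well. It is therefore enough to find, for some ordering $i,j,k$ of the colors, a face $\cC$ of colors $(i,j)$ and a face $\cC'$ of colors $(i,k)$ sharing at least three color $i$ edges.

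Suppose no such pair of faces exists for any choice of $i$. Fix $i=1$. The $n$ color $1$ edges are partitioned in two ways: by the two $(1,2)$ faces $\cC_1,\cC_2$ and by the two $(1,3)$ faces $\cC'_1,\cC'_2$. Each intersection $E_1(\cC_a)\cap E_1(\cC'_b)$ has size at most two, so $n\leq 4\cdot 2=8$. This already settles $n=9$.

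For $n=8$ every intersection must have size exactly two. Then each face contains exactly four color $1$ edges, hence has length eight. Applying the same count with $i=2$ and $i=3$ shows that all faces of all three color pairs have length eight, and any two faces with exactly one common color share exactly two edges of that color. This is precisely the class of graphs enumerated in CR\ref{cr:2}. Among them, only the graph in fig.~\ref{fig:graphexx} has $\max_M F=12$; every other graph in the class has $\max_M F>12$, and since $F$ is an integer, $\max_M F\geq 13=n+5$. Excluding fig.~\ref{fig:graphexx} completes the proof.

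The step that needs the most care is the reduction to CR\ref{cr:2}, namely checking that the counting argument forces exactly the hypotheses of CR\ref{cr:2} for all three colors simultaneously. The locality of the flip in Lemma~\ref{lem:flip} should also be stated explicitly, since the lemma is formulated only for $\cC\cup\cC'$ and we need the gain in $F(M,G)$ for all of $G$.
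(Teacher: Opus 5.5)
Your proof is correct and follows essentially the same route as the paper. The choice $M=E_1$ settles the case where some pair of colors has at least three faces. When all $F_{ij}=2$, counting the four intersections of the two color-$i$ partitions either yields a shared triple, handled by Lemma~\ref{lem:flip}, or, for $n=8$ only, the balanced configuration enumerated in CR\ref{cr:2}. Two points the paper leaves implicit are made explicit in your version: that the flip is local, so its one-face gain on $\cC\cup\cC'$ is a gain for all of $G$, and that the balanced configuration must hold for all three colors simultaneously.
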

\begin{proof}
    First, without loss of generality let $F_{12}\geq 3$, $F_{13}\geq 2$, and $F_{23}\geq 2$. Choose $M$ such that the color $0$ edges are parallel to the color $1$ edges. This gives $n$ faces of colors $(0,1)$, at least three faces of colors $(0,2)$, and at least two faces of colors $(0,3)$. In total, we get at least $n+5$ faces. This is $>3n/2$ for $n\leq 9$.
    
    It remains to consider graphs with $F_{12}=F_{13}=F_{23}=2$. Choosing the color $0$ edges of $M$ to be parallel to the color $1$ edges, we obtain $F(M,G)=n+4$ by the same argument. To prove the Lemma, we thus have to gain at least one additional face.
    For each pair of colors, denote the two faces of colors $(i,j)$ by $\cC_{ij}$ and $\cC'_{ij}$. Let $i\neq j\neq k$ be the three different colors. Since $E_i(G)=E_i(\cC_{ij}\sqcup \cC'_{ij})$, we have
    \[
    |E_i(\cC_{ik})| = |E_i(\cC_{ij})\cap E_i(\cC_{ik})| + |E_i(\cC'_{ij})\cap E_i(\cC_{ik})| \, ,
    \]
    and the same for $\cC_{ik}$ replaced by $\cC'_{ik}$. Since all $n$ edges of color $i$ have to be either in $\cC_{ik}$ or $\cC'_{ik}$, we find the condition
    \[
    \begin{aligned}
    	n={} &|E_i(\cC_{ik})| + |E_i(\cC'_{ik})|\\
    	={} &|E_i(\cC_{ij})\cap E_i(\cC_{ik})| + |E_i(\cC'_{ij})\cap E_i(\cC_{ik})| \\
    	&+ |E_i(\cC_{ij})\cap E_i(\cC'_{ik})| + |E_i(\cC'_{ij})\cap E_i(\cC'_{ik})| \,.
    \end{aligned}
    \]
    Thus, for $n=8$ either all intersections on the right hand side are equal to $2$, or at least one is $\geq 3$; and for $n=9$ at least one of the intersection has to be $\geq 3$.
    
    In the case where one of the $(i,j)$ faces shares at least three color $i$ edges with one of the $(i,k)$ faces, we can invoke Lemma~\ref{lem:flip} to obtain a color $0$ pairing $M'$, such that $F(M',G)=n+5$, as required. It only remains to consider the case $n=8$ and all intersections 
    $|E_i(\cA)\cap E_i(\cB)|=2$, $\forall \cA\in \{\cC_{ij},\cC'_{ij}\}$, $\cB\in \{\cC_{ik},\cC'_{ik}\}$,
    \textit{i.e.}, all faces that have exactly one color in common share exactly two edges. In particular, this implies that all faces have length eight. CR\ref{cr:2} shows that all such graphs, except the one in fig.~\ref{fig:graphexx} satisfy $\max_M F(M,G)>12$, therefore concluding the proof.
\end{proof}

\section{Conclusion}
We identified 42 graphs $G$ with $2n=16$ vertices, for which products of their corresponding tensor model invariants $\Tr_G(T)$ do not factorize in the large $N$ limit. For these graphs, both
\[ \left\langle\Tr_G(T)\Tr_G(T)\right\rangle_c 
\quad \text{and}\quad \left\langle\Tr_G(T)\right\rangle_c\left\langle\Tr_G(T)\right\rangle_c \]
scale with $N^{24-8\nu}$, and thus the latter term does not dominate at large $N$. 41 of these graphs are MST. As an example, one is shown in fig.~\ref{fig:graphex} and we show all of them in the appendix. The single non-MST graph with the stated property is shown in fig.~\ref{fig:graphexx}.

We showed that for $n\leq 9$ all colored graphs that are neither MST nor the graph in fig.~\ref{fig:graphexx} satisfy the inequality \eqref{eq:maxfaces}. Using a computer program, we checked all 3-regular 3-edge-colored graphs with $n\leq 9$ that have a single face for one pair of colors---including all MST graphs. Together, these results imply that these 42 colored graphs are all the counterexamples to the large $N$ factorization of the expectation values of trace invariants of Gaussian 3-index $O(N)^{\otimes 3}$ invariant random tensors up to 18 vertices.

A posteriori, we see that being MST, \textit{i.e.}, having $F_{12}(G)=F_{13}(G)=F_{23}(G)=1$, leads to a relatively small $F(M,G)$ when $G$ is paired with a pairing $M$ of color $0$. However, being MST is not sufficient even for $n\geq 8$. Among the 12,504 MST graphs with $n=8$, only 41 violate \eqref{eq:maxfaces}, and no colored graph with $n=9$ does so. It would be interesting to find out what property distinguishes these 41 graphs from other MST graphs. Also, as shown by the graph in fig.~\ref{fig:graphexx}, being MST is not necessary to violate \eqref{eq:maxfaces}.
Its face structure is such that it does not allow to increase $F(M,G)$ by optimizing $M$ beyond the naive choice of placing all edges of $M$ parallel to, \textit{e.g.}, the color $1$ edges. 
This graph is non-bipartite and planar\footnote{But it is not planar in the sense of ribbon graphs, see, \textit{e.g.}, \cite{Gurau:2025evo}.} in the usual graph theoretic sense.

Our results should be viewed as an effect of small numbers. For larger $n$, it becomes increasingly difficult to increase $F(M,G)$, and to achieve a maximum, $M$ has to be tailored to the structure of the edge-colored graph $G$ which can become very complex. For large random graphs, the expectation value of $F$ grows as $(1+o(1))n$ with probability going to $1$ \cite{Gurau:2025evo}. In other words, compared to their size, large graphs rarely have many faces which, by \eqref{eq:maxfaces}, would be required for factorization. 
This observation is at the heart of the proof in \cite{Gurau:2025evo}.

Let us note that for $8\leq n\leq 9$, the proof of our main theorem partially utilizes our computational results. On the one hand, we would like a purely analytic proof, \textit{e.g.}, by using Lemma~\ref{lem:flip} twice. On the other hand, we found examples of graphs where this strategy is not possible, and one would have to distinguish many different cases. Since the non-MST graphs with $8\leq n \leq 9$ are very close to violating \eqref{eq:maxfaces}, these complications are not surprising. 
For $n=10$, we found again several graphs that violate the inequality. 

Although large $N$ factorization does not hold generically for Gaussian random tensors, certain families of invariants, \textit{e.g.}, those given by melonic graphs, do obey large $N$ factorization \cite{Gurau:2025evo}. Let us stress that, as noted in \cite{Gurau:2025evo}, to show factorization for a specific trace invariant it is not sufficient to check the inequality \eqref{eq:maxfaces}, since the proof of Lemma 1 in \cite{Gurau:2025evo} assumes \eqref{eq:maxfaces} for all graphs.
It would be desirable to find other families of invariants for which large $N$ factorization holds. 

\acknowledgments
We thank Razvan Gurau for very helpful discussions and Alicia Castro and R.~Gurau for comments on the manuscript. 
We thank the authors of the program Nauty \cite{MCKAY201494} for their work on the practical computation of graph isomorphisms. 
H.~K.~has been supported by the Deutsche Forschungsgemeinschaft (DFG, German Research Foundation) under Germany's Excellence Strategy EXC--2181/1--390900948 (the Heidelberg STRUCTURES Cluster of Excellence).
The authors acknowledge support by the state of Baden-Württemberg through bwHPC, in particular bwUniCluster 3.0.
\medskip

\bibliography{references.bib}
\onecolumn

\begin{center}
	\textbf{Appendix}
\end{center}

On the following pages, we give all 3-regular 3-edge-colored graphs on $2n=16$ vertices with $\max_M F(M,G) =12$. These are the smallest graphs that provide counterexamples to large $N$ factorization in random tensor models. There are 42 such graphs. We shown them in \mbox{figs.~\ref{fig:graphs1} \& \ref{fig:graphs2}}, and give a list of their edges in table~\ref{tab:graphs} for the graphs \textit{1--41}. 
\linebreak Graph \textit{42} is different, its edge lists are:
\begin{align*}
	E_1&=\{\{1,2\},\{3,4\},\dots,\{15,16\}\}\;,\\
	E_2&=\{\{2,3\},\{4,5\},\{6,7\},\{8,1\},\{10,11\},\{12,13\},\{14,15\},\{16,9\}\}\;, \\ E_3&=\{\{1,3\},\{2,9\},\{4,11\},\{5,7\},\{6,13\},\{8,15\},\{10,12\},\{14,16\}\}\;.
\end{align*}

\begin{figure}[p]
	\centering
	\input{graphs1.tex}
	\caption{All 3-regular 3-edge-colored graphs on $2n=16$ vertices with $\max_M F =12$.}
	\label{fig:graphs1}
\end{figure}
\begin{figure}[p]
	\centering
	\input{graphs2.tex}
	\caption{Continuation of Fig.~\ref{fig:graphs1}.}
	\label{fig:graphs2}
\end{figure}

\newpage

\renewcommand{\arraystretch}{1.19}
\setlength{\tabcolsep}{4pt}
\begin{table}[p]
	\caption{List of the graphs \textit{1--41} from figs~\ref{fig:graphs1} \& \ref{fig:graphs2}.}
	\centering
	\label{tab:graphs}
	\input{graphs_table}

	\ \\[.5ex]
	\parbox{0.72\textwidth}{\textit{Note: 
			We fix $E_1=\{\{1,2\},\{3,4\},\dots,\{2n-1,2n\}\}$, $E_2=\{\{2,3\},\{4,5\},\dots,\{2n,1\}\}$, and only give the edges of color $3$ (blue in figs.~\emph{\ref{fig:graphs1}} \& \emph{\ref{fig:graphs2}}). 
			}
	}
\end{table}
\end{document}

%% file: graphs1.tex
\begin{tikzpicture}[font=\itshape]
\drawgraph{1}{0}{0}{(0)--(2)(1)--(5)(3)--(7)(4)--(11)(6)--(8)(9)--(13)(10)--(15)(12)--(14)}
\drawgraph{2}{1}{0}{(0)--(2)(1)--(5)(3)--(7)(4)--(13)(6)--(8)(9)--(14)(10)--(12)(11)--(15)}
\drawgraph{3}{2}{0}{(0)--(2)(1)--(5)(3)--(7)(4)--(12)(6)--(8)(9)--(11)(10)--(14)(13)--(15)}
\drawgraph{4}{3}{0}{(0)--(2)(1)--(5)(3)--(7)(4)--(15)(6)--(12)(8)--(13)(9)--(11)(10)--(14)}
\drawgraph{5}{0}{-1}{(0)--(2)(1)--(5)(3)--(7)(4)--(15)(6)--(11)(8)--(10)(9)--(13)(12)--(14)}
\drawgraph{6}{1}{-1}{(0)--(2)(1)--(5)(3)--(7)(4)--(15)(6)--(10)(8)--(13)(9)--(11)(12)--(14)}
\drawgraph{7}{2}{-1}{(0)--(2)(1)--(5)(3)--(8)(4)--(6)(7)--(11)(9)--(13)(10)--(15)(12)--(14)}
\drawgraph{8}{3}{-1}{(0)--(2)(1)--(5)(3)--(8)(4)--(6)(7)--(13)(9)--(14)(10)--(12)(11)--(15)}
\drawgraph{9}{0}{-2}{(0)--(2)(1)--(5)(3)--(8)(4)--(6)(7)--(13)(9)--(11)(10)--(15)(12)--(14)}
\drawgraph{10}{1}{-2}{(0)--(2)(1)--(5)(3)--(8)(4)--(6)(7)--(11)(9)--(14)(10)--(12)(13)--(15)}
\drawgraph{11}{2}{-2}{(0)--(2)(1)--(5)(3)--(9)(4)--(15)(6)--(11)(7)--(13)(8)--(10)(12)--(14)}
\drawgraph{12}{3}{-2}{(0)--(2)(1)--(5)(3)--(9)(4)--(15)(6)--(14)(7)--(11)(8)--(13)(10)--(12)}
\drawgraph{13}{0}{-3}{(0)--(2)(1)--(5)(3)--(9)(4)--(15)(6)--(14)(7)--(12)(8)--(10)(11)--(13)}
\drawgraph{14}{1}{-3}{(0)--(2)(1)--(5)(3)--(10)(4)--(6)(7)--(15)(8)--(13)(9)--(11)(12)--(14)}
\drawgraph{15}{2}{-3}{(0)--(2)(1)--(5)(3)--(10)(4)--(15)(6)--(8)(7)--(13)(9)--(11)(12)--(14)}
\drawgraph{16}{3}{-3}{(0)--(2)(1)--(5)(3)--(11)(4)--(6)(7)--(13)(8)--(10)(9)--(15)(12)--(14)}
\drawgraph{17}{0}{-4}{(0)--(2)(1)--(5)(3)--(11)(4)--(15)(6)--(8)(7)--(13)(9)--(14)(10)--(12)}
\drawgraph{18}{1}{-4}{(0)--(2)(1)--(5)(3)--(12)(4)--(6)(7)--(15)(8)--(10)(9)--(14)(11)--(13)}
\drawgraph{19}{2}{-4}{(0)--(2)(1)--(5)(3)--(11)(4)--(15)(6)--(14)(7)--(9)(8)--(13)(10)--(12)}
\drawgraph{20}{3}{-4}{(0)--(2)(1)--(5)(3)--(13)(4)--(9)(6)--(8)(7)--(11)(10)--(15)(12)--(14)}
\drawgraph{21}{0}{-5}{(0)--(2)(1)--(5)(3)--(13)(4)--(15)(6)--(8)(7)--(12)(9)--(11)(10)--(14)}
\drawgraph{22}{1}{-5}{(0)--(2)(1)--(5)(3)--(14)(4)--(6)(7)--(9)(8)--(13)(10)--(12)(11)--(15)}
\drawgraph{23}{2}{-5}{(0)--(2)(1)--(5)(3)--(13)(4)--(15)(6)--(11)(7)--(9)(8)--(12)(10)--(14)}
\drawgraph{24}{3}{-5}{(0)--(2)(1)--(5)(3)--(14)(4)--(6)(7)--(11)(8)--(13)(9)--(15)(10)--(12)}
\end{tikzpicture}

%% file: graphs2.tex
\begin{tikzpicture}[font=\itshape]
\drawgraph{25}{0}{-6}{(0)--(2)(1)--(5)(3)--(14)(4)--(6)(7)--(12)(8)--(10)(9)--(15)(11)--(13)}
\drawgraph{26}{1}{-6}{(0)--(2)(1)--(5)(3)--(14)(4)--(10)(6)--(11)(7)--(9)(8)--(12)(13)--(15)}
\drawgraph{27}{2}{-6}{(0)--(2)(1)--(5)(3)--(14)(4)--(10)(6)--(8)(7)--(12)(9)--(11)(13)--(15)}
\drawgraph{28}{3}{-6}{(0)--(2)(1)--(6)(3)--(5)(4)--(8)(7)--(13)(9)--(11)(10)--(15)(12)--(14)}
\drawgraph{29}{0}{-7}{(0)--(2)(1)--(6)(3)--(5)(4)--(8)(7)--(11)(9)--(13)(10)--(15)(12)--(14)}
\drawgraph{30}{1}{-7}{(0)--(2)(1)--(6)(3)--(5)(4)--(11)(7)--(13)(8)--(10)(9)--(15)(12)--(14)}
\drawgraph{31}{2}{-7}{(0)--(2)(1)--(6)(3)--(5)(4)--(10)(7)--(15)(8)--(13)(9)--(11)(12)--(14)}
\drawgraph{32}{3}{-7}{(0)--(2)(1)--(6)(3)--(5)(4)--(12)(7)--(15)(8)--(13)(9)--(11)(10)--(14)}
\drawgraph{33}{0}{-8}{(0)--(2)(1)--(6)(3)--(5)(4)--(10)(7)--(12)(8)--(14)(9)--(11)(13)--(15)}
\drawgraph{34}{1}{-8}{(0)--(2)(1)--(7)(3)--(5)(4)--(11)(6)--(8)(9)--(13)(10)--(15)(12)--(14)}
\drawgraph{35}{2}{-8}{(0)--(2)(1)--(7)(3)--(8)(4)--(6)(5)--(11)(9)--(13)(10)--(15)(12)--(14)}
\drawgraph{36}{3}{-8}{(0)--(2)(1)--(7)(3)--(9)(4)--(11)(5)--(13)(6)--(15)(8)--(10)(12)--(14)}
\drawgraph{37}{0}{-9}{(0)--(2)(1)--(7)(3)--(10)(4)--(6)(5)--(15)(8)--(13)(9)--(11)(12)--(14)}
\drawgraph{38}{1}{-9}{(0)--(2)(1)--(7)(3)--(14)(4)--(9)(5)--(11)(6)--(8)(10)--(12)(13)--(15)}
\drawgraph{39}{2}{-9}{(0)--(2)(1)--(7)(3)--(14)(4)--(12)(5)--(9)(6)--(11)(8)--(10)(13)--(15)}
\drawgraph{40}{3}{-9}{(0)--(2)(1)--(9)(3)--(14)(4)--(6)(5)--(15)(7)--(11)(8)--(13)(10)--(12)}
\drawgraph{41}{0.5}{-10}{(0)--(2)(1)--(11)(3)--(7)(4)--(15)(5)--(10)(6)--(8)(9)--(13)(12)--(14)}
\drawgraphtwo{42}{2.2}{-10}{(0)--(2)(1)--(8)(3)--(10)(4)--(6)(9)--(11)(13)--(15)}
\draw[blue, thick] (7) to[out=45,in=135] (14);
\draw[blue, thick] (5) to[out=-45,in=-135] (12);
\end{tikzpicture}

%% file: graphs_table.tex
\begin{tabular}{>{\itshape}c cccccccc}
\toprule
\normalfont Graph no. & \multicolumn{8}{c}{Edges of color 3} \\
\midrule
1 & \{1, 3\}, & \{2, 6\}, & \{4, 8\}, & \{5, 12\}, & \{7, 9\}, & \{10, 14\}, & \{11, 16\}, & \{13, 15\} \\
2 & \{1, 3\}, & \{2, 6\}, & \{4, 8\}, & \{5, 14\}, & \{7, 9\}, & \{10, 15\}, & \{11, 13\}, & \{12, 16\} \\
3 & \{1, 3\}, & \{2, 6\}, & \{4, 8\}, & \{5, 13\}, & \{7, 9\}, & \{10, 12\}, & \{11, 15\}, & \{14, 16\} \\
4 & \{1, 3\}, & \{2, 6\}, & \{4, 8\}, & \{5, 16\}, & \{7, 13\}, & \{9, 14\}, & \{10, 12\}, & \{11, 15\} \\
5 & \{1, 3\}, & \{2, 6\}, & \{4, 8\}, & \{5, 16\}, & \{7, 12\}, & \{9, 11\}, & \{10, 14\}, & \{13, 15\} \\
6 & \{1, 3\}, & \{2, 6\}, & \{4, 8\}, & \{5, 16\}, & \{7, 11\}, & \{9, 14\}, & \{10, 12\}, & \{13, 15\} \\
7 & \{1, 3\}, & \{2, 6\}, & \{4, 9\}, & \{5, 7\}, & \{8, 12\}, & \{10, 14\}, & \{11, 16\}, & \{13, 15\} \\
\midrule
8 & \{1, 3\}, & \{2, 6\}, & \{4, 9\}, & \{5, 7\}, & \{8, 14\}, & \{10, 15\}, & \{11, 13\}, & \{12, 16\} \\
9 & \{1, 3\}, & \{2, 6\}, & \{4, 9\}, & \{5, 7\}, & \{8, 14\}, & \{10, 12\}, & \{11, 16\}, & \{13, 15\} \\
10 & \{1, 3\}, & \{2, 6\}, & \{4, 9\}, & \{5, 7\}, & \{8, 12\}, & \{10, 15\}, & \{11, 13\}, & \{14, 16\} \\
11 & \{1, 3\}, & \{2, 6\}, & \{4, 10\}, & \{5, 16\}, & \{7, 12\}, & \{8, 14\}, & \{9, 11\}, & \{13, 15\} \\
12 & \{1, 3\}, & \{2, 6\}, & \{4, 10\}, & \{5, 16\}, & \{7, 15\}, & \{8, 12\}, & \{9, 14\}, & \{11, 13\} \\
13 & \{1, 3\}, & \{2, 6\}, & \{4, 10\}, & \{5, 16\}, & \{7, 15\}, & \{8, 13\}, & \{9, 11\}, & \{12, 14\} \\
14 & \{1, 3\}, & \{2, 6\}, & \{4, 11\}, & \{5, 7\}, & \{8, 16\}, & \{9, 14\}, & \{10, 12\}, & \{13, 15\} \\
\midrule
15 & \{1, 3\}, & \{2, 6\}, & \{4, 11\}, & \{5, 16\}, & \{7, 9\}, & \{8, 14\}, & \{10, 12\}, & \{13, 15\} \\
16 & \{1, 3\}, & \{2, 6\}, & \{4, 12\}, & \{5, 7\}, & \{8, 14\}, & \{9, 11\}, & \{10, 16\}, & \{13, 15\} \\
17 & \{1, 3\}, & \{2, 6\}, & \{4, 12\}, & \{5, 16\}, & \{7, 9\}, & \{8, 14\}, & \{10, 15\}, & \{11, 13\} \\
18 & \{1, 3\}, & \{2, 6\}, & \{4, 13\}, & \{5, 7\}, & \{8, 16\}, & \{9, 11\}, & \{10, 15\}, & \{12, 14\} \\
19 & \{1, 3\}, & \{2, 6\}, & \{4, 12\}, & \{5, 16\}, & \{7, 15\}, & \{8, 10\}, & \{9, 14\}, & \{11, 13\} \\
20 & \{1, 3\}, & \{2, 6\}, & \{4, 14\}, & \{5, 10\}, & \{7, 9\}, & \{8, 12\}, & \{11, 16\}, & \{13, 15\} \\
21 & \{1, 3\}, & \{2, 6\}, & \{4, 14\}, & \{5, 16\}, & \{7, 9\}, & \{8, 13\}, & \{10, 12\}, & \{11, 15\} \\
\midrule
22 & \{1, 3\}, & \{2, 6\}, & \{4, 15\}, & \{5, 7\}, & \{8, 10\}, & \{9, 14\}, & \{11, 13\}, & \{12, 16\} \\
23 & \{1, 3\}, & \{2, 6\}, & \{4, 14\}, & \{5, 16\}, & \{7, 12\}, & \{8, 10\}, & \{9, 13\}, & \{11, 15\} \\
24 & \{1, 3\}, & \{2, 6\}, & \{4, 15\}, & \{5, 7\}, & \{8, 12\}, & \{9, 14\}, & \{10, 16\}, & \{11, 13\} \\
25 & \{1, 3\}, & \{2, 6\}, & \{4, 15\}, & \{5, 7\}, & \{8, 13\}, & \{9, 11\}, & \{10, 16\}, & \{12, 14\} \\
26 & \{1, 3\}, & \{2, 6\}, & \{4, 15\}, & \{5, 11\}, & \{7, 12\}, & \{8, 10\}, & \{9, 13\}, & \{14, 16\} \\
27 & \{1, 3\}, & \{2, 6\}, & \{4, 15\}, & \{5, 11\}, & \{7, 9\}, & \{8, 13\}, & \{10, 12\}, & \{14, 16\} \\
28 & \{1, 3\}, & \{2, 7\}, & \{4, 6\}, & \{5, 9\}, & \{8, 14\}, & \{10, 12\}, & \{11, 16\}, & \{13, 15\} \\
\midrule
29 & \{1, 3\}, & \{2, 7\}, & \{4, 6\}, & \{5, 9\}, & \{8, 12\}, & \{10, 14\}, & \{11, 16\}, & \{13, 15\} \\
30 & \{1, 3\}, & \{2, 7\}, & \{4, 6\}, & \{5, 12\}, & \{8, 14\}, & \{9, 11\}, & \{10, 16\}, & \{13, 15\} \\
31 & \{1, 3\}, & \{2, 7\}, & \{4, 6\}, & \{5, 11\}, & \{8, 16\}, & \{9, 14\}, & \{10, 12\}, & \{13, 15\} \\
32 & \{1, 3\}, & \{2, 7\}, & \{4, 6\}, & \{5, 13\}, & \{8, 16\}, & \{9, 14\}, & \{10, 12\}, & \{11, 15\} \\
33 & \{1, 3\}, & \{2, 7\}, & \{4, 6\}, & \{5, 11\}, & \{8, 13\}, & \{9, 15\}, & \{10, 12\}, & \{14, 16\} \\
34 & \{1, 3\}, & \{2, 8\}, & \{4, 6\}, & \{5, 12\}, & \{7, 9\}, & \{10, 14\}, & \{11, 16\}, & \{13, 15\} \\
35 & \{1, 3\}, & \{2, 8\}, & \{4, 9\}, & \{5, 7\}, & \{6, 12\}, & \{10, 14\}, & \{11, 16\}, & \{13, 15\} \\
\midrule
36 & \{1, 3\}, & \{2, 8\}, & \{4, 10\}, & \{5, 12\}, & \{6, 14\}, & \{7, 16\}, & \{9, 11\}, & \{13, 15\} \\
37 & \{1, 3\}, & \{2, 8\}, & \{4, 11\}, & \{5, 7\}, & \{6, 16\}, & \{9, 14\}, & \{10, 12\}, & \{13, 15\} \\
38 & \{1, 3\}, & \{2, 8\}, & \{4, 15\}, & \{5, 10\}, & \{6, 12\}, & \{7, 9\}, & \{11, 13\}, & \{14, 16\} \\
39 & \{1, 3\}, & \{2, 8\}, & \{4, 15\}, & \{5, 13\}, & \{6, 10\}, & \{7, 12\}, & \{9, 11\}, & \{14, 16\} \\
40 & \{1, 3\}, & \{2, 10\}, & \{4, 15\}, & \{5, 7\}, & \{6, 16\}, & \{8, 12\}, & \{9, 14\}, & \{11, 13\} \\
41 & \{1, 3\}, & \{2, 12\}, & \{4, 8\}, & \{5, 16\}, & \{6, 11\}, & \{7, 9\}, & \{10, 14\}, & \{13, 15\} \\
\bottomrule
\end{tabular}